\numberwithin{equation}{section}
\begin{document}

\renewcommand{\theequation}{\thesection.\arabic{equation}}
\setcounter{secnumdepth}{2}
\newtheorem{theorem}{Theorem}[section]
\newtheorem{definition}[theorem]{Definition}
\newtheorem{lemma}[theorem]{Lemma}
\newtheorem{corollary}[theorem]{Corollary}
\newtheorem{proposition}[theorem]{Proposition}
\numberwithin{equation}{section}
\theoremstyle{definition}
\newtheorem{example}[theorem]{Example}
\title[Metric Reduction in Generalized Geometry and BTFT's]
{Metric Reduction in Generalized Geometry and Balanced Topological Field Theories}

\author[Yicao Wang]{Yicao Wang}
\address
{Department of Mathematics, Hohai University, Nanjing 210098, China}
\thanks{This study is supported by the Natural Science Foundation of Jiangsu Province (BK20150797).}
\maketitle

\baselineskip= 20pt
\begin{abstract}The recently established metric reduction in generalized geometry is encoded in 0-dimensional supersymmetric $\sigma$-models. This is an example of balanced topological field theories. To find the geometric content of such models, the reduction of Bismut connections is studies in detail. Generalized K$\ddot{a}$hler reduction is briefly revisited in this formalism and the generalized K$\ddot{a}$hler geometry on the moduli space of instantons on a generalized K$\ddot{a}$hler 4-manifold of even type is thus explained formally in a topological field theoretic way.
\end{abstract}

\section{Introduction}
In this paper, we will investigate the relationship between balanced topological field theories, the name of which was coined in \cite{DM} in 1997, and the reduction theory in the more recent generalized geometry; in particular, we will show how a large portion of the metric reduction theory developed in \cite{BCG1} \cite{Ca} fits into balanced topological field theories. This investigation sheds some new light on both balanced topological field theories and generalized geometry.

In generalized geometry, the notion of a generalized complex manifold is a simultaneous generalization of complex and symplectic manifolds. Since the famous (Marsden-Weinstein) symplectic reduction is an important construction in symplectic geometry, it is natural to establish a certain "generalized" reduction theory. This was solved in great generality in \cite{BCG1}. However, there are still some subtleties in the new reduction theory which should be clarified. We just mention one of these subtleties: In the classical symplectic reduction, a moment map takes its values in the dual space $\mathfrak{g}^*$ of the Lie algebra $\mathfrak{g}$ acting on the manifold $M$, while the one in \cite{BCG1} can take values in a more general $\mathfrak{g}$-module. In the balanced field theoretic approach to the reduction theory presented in this article, to some extent, we explain this novelty in terms of more traditional ideas: A moment map in \cite{BCG1} is interpreted here as a section of a trivial equivariant vector bundle. However, we use more general equivariant bundles and this naturally generalizes the notion of moment maps in \cite{BCG1}.

Let us explain several hints leading us to relate balanced topological field theories with metric reduction in generalized geometry. The first hint has its origin in 2-dimensional supersymmetric $\sigma$-models, which are also the main motivating sources of generalized geometry. The most general $N=(1,1)$ action, defined on a 2-dimensional Minkowski space $\Sigma$, is of the following form:
\begin{equation}S(\varphi)=\frac{1}{2}\int d^2 \sigma d\theta^+d\theta^-E_{ij}(\varphi)D_+\varphi^iD_-\varphi^j,\label{2daction}\end{equation}
where $E_{ij}=g_{ij}+B_{ij}$ for a Riemannian metric $g$ and a 2-form $B_{ij}$ over the target space, and $D_\pm=\partial_{\theta^\pm}+\sqrt{-1}\theta^\pm\partial_{\pm}$. The two supercharges are $Q_\pm=\partial_{\theta^\pm}-\sqrt{-1}\theta^\pm\partial_{\pm}$,
satisfying the following anti-commutative relations (part of the $N=(1,1)$ supersymmetry algebra):
\[\{Q_\pm, Q_\mp\}=0,\quad \{Q_\pm, Q_\pm\}=P_\pm,\]
where $P_\pm$ are the generators of infinitesimal translation in the Poincare group in 2 dimensions. If the model is reduced to zero dimension by dimensional reduction, the supersymmetry algebra turns out to be
\[\{Q_m, Q_n\}=0,\quad m, n=\pm.\]
\emph{This is precisely the anti-commutative relations satisfied by topological charges in a balanced field theory.}

The second hint is that, it is well-known that the mathematical reduction theory can be physically realized by gauging a $\sigma$-model carrying global symmetries. A large part of ordinary reduction theory can then be encoded in an N=1 topological gauge theory, which computes the Euler number of a certain vector bundle over some moduli space. These ideas were developed into the so-called Mathai-Quillen formalism of N=1 topological field theories \cite{Wu}. Though it seems that the generalized reduction theory cannot directly fit into the Mathai-Quillen formalism, it is reasonable to conjecture that there should be some analogous formalism to encode the generalized theory.

The third hint is more concrete. In view of the reduction theory of \cite{BCG1}, the reduction procedure involves two basic steps: the Courant reduction to an invariant submanifold $N\subset M$ and then the Courant reduction to the quotient space $N/G$, where $G$ is the group acting on $M$. Actually, these two steps are precisely what Blau and Thompson considered in \cite{BT1} in the context of N=2 topological gauge theories, which was later revisited from the angle of balanced topological field theories in \cite{DM}.

Our approach is not simply an application of balanced topological field theories, but contains some new points which were not covered before. In \cite{DM} when writing down a general action for a balanced topological field theory, the authors insisted that it should be $sl(2)$-invariant (see Sect.~\ref{MI}). However, in the presence of the NS-flux $H$ which is essential in generalized geometry, this invariance is broken and new geometry arises. In this sense, our investigation extends the content of \cite{DM}. Motivated by generalized geometry, we also suggest some possible future directions in balanced theories in the last section.

This paper is motivated directly by the investigation in \cite{BCG2}. N. Hitchin discovered in \cite{Hi2} that the moduli space of instantons over a generalized K$\ddot{a}$hler 4-manifold of even type is equipped with a natural generalized K$\ddot{a}$hler structure, and he asked whether this generalized K$\ddot{a}$hler structure could be viewed as obtained from certain "generalized K$\ddot{a}$hler reduction" procedure, just as in the genuine K$\ddot{a}$hler case. This question was affirmatively answered in \cite{BCG2} by applying the reduction theory developed in \cite{BCG1} to this infinite dimensional case formally. The moduli space of instantons is very important in topological field theories. To understand the work of \cite{BCG2} in some depth, we could further ask ourselves what the underlying topological field theoretical content of \cite{BCG2} is. The present work partly arises as an attempt to look for an answer to this question.

The outline of the present article is as follows. In Sect.~\ref{MI}, we introduce our Model I, a zero-dimensional supersymmetric $\sigma$-model. This is a balanced topological field theory whose partition function computes the Euler characteristic of the target space. In Sect.~\ref{MII}, we give our Model II, which is obtained by gauging Model I when there is a global symmetry group $G$. To compare with computations in generalized geometry, we carry out the localization of the path-integral explicitly. The reduced model is then again our Model I with the quotient space as the target space. In Sect.~\ref{Model3}, we extend our Model I to Model III which involves an auxiliary vector bundle $W$ with a generic section $\sigma$. The localization then gives rise to Model I with the zero-locus $\sigma^{-1}(0)$ as the target space. In Sect.~\ref{M4}, we combine the former constructions together to give the most general Model IV. Due to the detailed analysis of the basic constructions in former section, the localization is only sketched briefly. The reduced model is of course our Model I with the quotient $\sigma^{-1}(0)/G$ as the target space. To see the several reduced models really compute the Euler characteristic of certain manifolds, in Sect.~\ref{Bismut}, we derive the curvatures of the $-$-Bismut connections in the reduced models, with a quotient or a submanifold as the target space. This computation is possible due to the observation in \cite{Hi} \cite{Gu1} that the Bismut connections can be expressed using Courant brackets and the fact that the reduced Courant bracket can be expressed in terms of Courant reduction in the sense of \cite{BCG1}. It is showed that the purely geometric computations do coincide precisely with the physical interpretation. With the metric reduction in place, we briefly discuss generalized K$\ddot{a}$hler reduction in Sect.~\ref{GKR}. In Sect.~\ref{MOI}, as an application of our approach to the reduction theory, we revisit the work in \cite{BCG2} and clarify its underground field theoretic content. Since the reduction procedure in this setting was mathematically analyzed in detail in the literature, our main new contribution is to write down the action of the underlying topological field theory. The last section outlines some future problems motivated by our present investigation.

Let us add another comment on our approach to generalized geometry. Physically, there are two ways to obtain an $N=(2,2)$ supersymmetric $\sigma$-model: On one side, one can start from an $N=(1,1)$ model and extend it to on-shell $N=(2,2)$ supersymmetry by introducing further geometric structures on the target space. On the other side, one can also try to construct off-shell $N=(2,2)$ models directly from $N=(2,2)$ superspace techniques. The latter approach involves a further complicated classification of superfields into chiral-, twisted chiral- and even semichiral-superfields. The off-shell formulation of $N=(2,2)$ $\sigma$-models was only resolved recently in \cite{Lin}. However, in the most general case, there are singular points on the target space around which the above classification breaks down. Our treatment of the zero-dimensional analogue of 2-dimensional models goes in the spirit of the $N=(1,1)$ approach, avoiding the possible singularities arising in the $N=(2,2)$ approach.

There are also other ways to approach the generalized reduction theory by gauging supersymmetric $\sigma$-models. In \cite{Me} the reduction of generalized K$\ddot{a}$hler structures was considered by gauging 2-dimensional off-shell $N=(2,2)$ supersymmetric $\sigma$-models. Later in \cite{Ka} the same topic was revisited by gauging 2-dimensional $N=(1,1)$ supersymmetric $\sigma$-models. However, the cases investigated in these papers were a special one called Hamiltonian reduction, the counterpart of Marsden-Weinstein reduction in symplectic geometry. This case was mathematically investigated in detail in \cite{LT0}.
\section{Model I}\label{MI}
As noted in \cite{DM}, an N=1 theory treats the geometry of the supermanifold $\Pi TM$, i.e. the tangent bundle of a smooth manifold $M$ with the parity of the fiber being reversed, while an N=2 theory treats the geometry of the \emph{iterated superspace} $\Pi T(\Pi TM)$. Despite this similarity, compared with N=1 topological field theories, N=2 topological ones are not well-developed. We refer the reader to \cite{DM} \cite{BT2} for the basics of N=2 topological field theories.

Our starting point is a Riemannian manifold $(M, g)$ together with a NS-flux $H$, i.e. a closed 3-form. The triple $(M, g, H)$ is also called a generalized Riemannian manifold. There are two topological charges $d_\pm$ in a balanced topological theory, with the following anti-commutative relations:
\[\{d_m, d_n\}=0,\quad m, n=\pm.\]
The N=2 scalar superfields are of the form \[\varphi^i=x^i+\theta^+\psi_+^i+\theta^-\psi_-^i+\theta^+\theta^-\tilde{F}^i,\]
where $\tilde{F}$ lives in the 2-jet bundle of $M$.
The components of $\varphi$ obey
\[d_m x^i=\psi_m^i,\quad d_m \psi_n^i=\epsilon_{mn}\tilde{F}^i,\quad d_m \tilde{F}^i=0.\]
Note that all throughout the paper, we follow the convention that $-\epsilon_{+-}=\epsilon^{+-}=1$.

To write the action in a covariant way, we shall equip $M$ with the Levi-Civita connection and introduce the auxiliary fileds
\[F^i=\tilde{F}^i+(\Gamma_-\psi_+)^i=\tilde{F}^i+\Gamma_{jk}^i\psi_-^j\psi_+^k,\]
where $\Gamma_jdx^j$ is the connection form of the Levi-Civita connection. Let $R$ be the curvature of this connection.

There is a natural $sl_2$-action on the field content, generated by three operators
\[L_{\pm\pm}=\psi_\pm^i\frac{\partial}{\partial \psi_\mp^i},\quad L_{+-}=\psi_+^i\frac{\partial}{\partial \psi_+^i}-\psi_-^i\frac{\partial}{\partial \psi_-^i}.\]
Note that $L_{+-}$ computes the ghost number. The full $sl_2$-invariance would prevent the potential $B$ from appearing in the following action (\ref{0action}). However, the NS-flux $H=dB$ even features generalized geometry. So in our case, we only keep the requirement that the action be $L_{+-}$-invariant, namely the ghost number of the action should be zero.

The first model we shall consider in this paper is a model without any non-trivial group action and with no extra vector bundle $W$. We call it Model I, the action of which is of the following form:\footnote{We follow the convention that $\int d\theta^+ d\theta^- \theta^-\theta^+=1$.}
\begin{equation}S(x, \psi_\pm ,F)=\frac{1}{2}\int d\theta^+d\theta^- E_{ij}(\varphi)Q_-\varphi^i Q_+\varphi^j,\label{0action}\end{equation}
where $E_{ij}=g_{ij}-B_{\alpha ij}$, $B_\alpha$ is a local potential of $H$ over a coordinate patch $U_\alpha$ of $M$, and $Q_\pm=\partial_{\theta^\pm}$. In terms of components,
\begin{equation}S_I=-\frac{1}{2}(\psi_-, R^-\psi_-)+\frac{1}{2}(F+\frac{1}{2}H_{ij}^{\quad .}\psi_-^i\psi_+^j, F+\frac{1}{2}H_{{i'}{j'}}^{\quad \ast}\psi_-^{i'}\psi_+^{j'}),\end{equation}
where $R^-$ is the curvature of the $-$-Bismut connection $\nabla^-=\nabla-\frac{1}{2}g^{-1}H$, i.e.
\[R^-_{ijkl}=R_{ijkl}-\frac{1}{2}(\nabla_iH_{jkl}-\nabla_jH_{ikl})+\frac{1}{4}(H_{ipl}H_{jk}^{\quad p}-H_{jpl}H_{ik}^{\quad p}).\]
Note that $S$ is well-defined globally, though $B_\alpha$ need not be. This action can also be derived from the action (\ref{2daction}) by dimensional reduction.

Integrating out the auxiliary field $F$, we get the partition function of the model\footnote{In the case $H=0$, the $sl_2$-invariance follows from the Bianchi identity $R_{ijkl}+R_{jkil}+R_{kijl}=0$ which fails when $H\neq 0$.}:
\[Z=\int \frac{dxd\psi_+d\psi_-}{\sqrt{g}} e^{\frac{1}{2}(\psi_-, R^-\psi_-)},\]
which, up to a multiplicative constant, is the Euler characteristic of $M$ as expected. The result can equally be expressed in terms of the curvature $R^+$ of $\nabla^+=\nabla+\frac{1}{2}g^{-1}H$ since $R^-_{ijkl}=R^+_{klij}$. For latter use, we also write down the following on-shell supersymmetric transform\footnote{Namely,  $\tilde{F}^i$ in an off-shell expression is replaced with $\Gamma^{(-)i}_{jk}\psi_-^j\psi_+^k$ or $\Gamma^{(+)i}_{kj}\psi_-^j\psi_+^k$ .}
\begin{eqnarray}d_+\psi_-^i=(\Gamma_{jk}^{\quad i}+\frac{1}{2}H_{jk}^{\quad i})\psi_-^j\psi_+^k=\Gamma^{(-)i}_{kj}\psi_-^j\psi_+^k,\label{onshell}\end{eqnarray}
where $\Gamma^{(-)}_jdx^j$ is the connection form of $\nabla^-$. This is how the Bismut connection arises in our model.

The topological supersymmetry can be extended further to N=4 on-shell supersymmetry if $M$ is equipped with a generalized K$\ddot{a}$hler structure. Recall from \cite{Gua1} that a generalized K$\ddot{a}$hler structure has an equivalent biHermitian description: There are two almost complex structures $J_\pm$ compatible with $g$, satisfying $\nabla^\pm J_\pm=0$ and that $H$ should be of type $(2,1)+(1,2)$ w.r.t. both of $J_\pm$, where $\nabla^\pm=\nabla\pm\frac{1}{2}g^{-1}H$. If $d_\pm'$ is the second pair of differentials besides $d_\pm$, the N=4 algebra is
\[\{d_m, d_n\}=\{d_m', d_n'\}=\{d_m, d_n'\}=0, \quad m,n=\pm.\]

With a generalized K$\ddot{a}$hler structure in place, the extended supersymmetric transform is of the form:
\[\delta'_\epsilon \varphi^i=\epsilon^+ J_{+j}^i(\varphi)Q_+\varphi^j+\epsilon^-J_{-i}^i(\varphi)Q_-\varphi^j.\]
In components, the extended transform is
\[d_\pm'x^i=J_{\pm j}^i\psi_\pm^j,\quad d'_\pm \tilde{F}^i=J^i_{\pm j,k}(\tilde{F}^k\psi_\pm^j-\tilde{F}^j\psi_\pm^k),\]
\[d'_\pm\psi_\pm^i=-J^i_{\pm j,k}\psi_\pm^k\psi_\pm^j,\quad d'_\pm\psi_\mp^i=\mp J^i_{\pm j}\tilde{F}^j-J^i_{\pm j,k}\psi_\mp^k\psi_\pm^j.\]
The on-shell form of these formulae is
\[d_\pm'x^i=J_{\pm j}^i\psi_\pm^j,\]
\[d'_\pm\psi_\pm^i=[\Gamma^{(\pm)i}_{kl}J_{\pm j}^l-\Gamma^{(\pm)l}_{kj}J^i_{\pm l}]\psi_\pm^k\psi_\pm^j,\]
\[d'_\pm\psi_\mp^i=\Gamma_{kl}^{(\pm)i}J_{\pm j}^l\psi_\mp^k\psi_\pm^j.\]

It should be remarked that, unlike the 2-dimensional case where $M$ being generalized K$\ddot{a}$hlerian is both sufficient and necessary for the model to acquire on-shell $N=(2,2)$ supersymmetry, in the zero-dimensional case, this is only sufficient for Model I to acquire on-shell N=4 supersymmetry. Besides, one can also consider off-shell N=3 supersymmetry in Model I by dropping either of $J_\pm$, and geometrically this can be used to investigate SKT geometry.

\section{Model II}\label{MII}
\subsection{The gauged model}
In this subsection we shall gauge Model I to encode the action of a Lie group $G$ of dimension $s$. As for the basic underlying algebraic and geometric structures of this equivariant setting, we refer the reader to  \cite{BT2} \cite{DM}. In this context, new fields $\phi_{mn}^a$, $\eta_m^a$ ($\phi_{mn}^a=\phi_{nm}^a, m,n=\pm, a=1,2,\dots, s$) carrying extra group indices should be introduced. $\phi$ and $\eta$ are even and odd respectively. They form the superfields
\[A_+^a=\theta^+ \phi_{++}^a+\theta^-\phi_{+-}^a+2\theta^+\theta^-\eta_+^a,\]
and
\[A_-^a=\theta^+ \phi_{+-}^a+\theta^-\phi_{--}^a+2\theta^+\theta^-\eta_-^a.\]
This corresponds to the Cartan model of N=2 equivariant theory or Wess-Zumino gauge in physical terms.

The N=2 equivariant field content fulfills
\[d_n\phi_{mp}=\epsilon_{nm}\eta_p+\epsilon_{np}\eta_m,\quad d_n\eta_m=\frac{1}{2}\epsilon^{pq}[\phi_{nq},\phi_{mp}].\]
\[d_nx^i=\psi_n^i,\quad d_n \psi_m^i=\phi_{nm}^a V_a^i+\epsilon_{nm}\tilde{F}^i,\]
\[d_n \tilde{F}^i=-\mathcal{L}(\phi_{nm})\psi_p^i\epsilon^{mp}-\mathcal{L}(\eta_n)x^i,\]
where $\mathcal{L}$ denotes Lie derivative, for example,
\[\mathcal{L}(\phi_{nm})\psi_p^i=\phi_{nm}^a\frac{\partial V_a^i}{\partial x^j}\psi_p^j,\]
and
\[\mathcal{L}(\eta_n)x^i=\eta_n^a V_a^i,\]
where $V_a^i\partial_{x^i}$ are the fundamental vector fields generated from a basis $\{e_a\}$ of the Lie algebra $\mathfrak{g}$ of $G$.

In the gauged model and under the Wess-Zumino gauge, $Q_\pm$ is replaced by $\mathcal{Q}_\pm$:
\[\mathcal{Q}_\pm\varphi^i=\partial_{\theta^\pm}\varphi^i+A_\pm^aV_a^i(\varphi).\]
The action in terms of superfields is
\begin{equation}S(\varphi, A)=\frac{1}{2}\int d\theta^+d\theta^-E_{ij}(\varphi)\mathcal{Q}_-\varphi^i\mathcal{Q}_+\varphi^j.\label{actionII}\end{equation}

Now to have a globally well-defined action, the local potentials $B_\alpha$ should meet some further requirements. First $B$ should be $G$-invariant (therefore $H$ is invariant). Besides, motivated by observations in \cite{BCG1}, we define $\xi_a=-\iota_a B_\alpha$ and require it to be a global equivariant 1-form on $M$. Then $H+\phi^a \xi_a$ is an closed equivariant 3-form and $\{V_a+\xi_a\}$ form an \emph{isotropic trivially extended action} of $\mathfrak{g}$ in terminology of \cite{BCG1}. Conversely, given a closed equivariant extension $H+\phi^a \xi_a$ of $H$ such that $\{V_a+\xi_a\}$ form an isotropic trivially extended $\mathfrak{g}$-action, we can find local potentials $B$ for both $H$ and $\xi_a$, at least when $G$ is compact and connected and the $G$-action is free.
\begin{theorem} Let compact connected Lie group $G$ act freely on $M$. If $H+\phi^a\xi_a$ is a closed equivariant extension of the $G$-invariant 3-form $H$ such that $\xi_a(V_b)+\xi_b(V_a)=0$, then there exist local $G$-invariant 2-forms $B$ such that
\begin{equation}d_\phi B=H+\phi^a\xi_a,\label{poten}\end{equation}
where $d_\phi=d-\phi^a\iota_a$ is the equivariant de Rham differential. Such $B$ are unique up to basic closed 2-form.
\end{theorem}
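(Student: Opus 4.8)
The plan is to reduce to a statement local over the orbit space and then exploit that the Cartan (equivariant de Rham) model of a \emph{free} action computes the de Rham cohomology of the quotient. Since $G$ is compact the action is proper, and being also free the orbit space $X:=M/G$ is a smooth manifold with $\pi\colon M\to X$ a principal $G$-bundle; so I would cover $X$ by contractible open sets $U$ over which the bundle trivializes and work on $P:=\pi^{-1}(U)$ (the asserted $B$ being one such $B_U$ per member of the cover). First unwind what must be shown. For a \emph{genuine} $2$-form $B$, comparing powers of $\phi$ shows $d_\phi B=H+\phi^a\xi_a$ is equivalent to the pair $dB=H$ and $\iota_aB=-\xi_a$ (with $\iota_a=\iota_{V_a}$). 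Expanding $d_\phi(H+\phi^a\xi_a)$ the same way, the hypothesis that $H+\phi^a\xi_a$ be a closed equivariant extension of $H$ unpacks to exactly $dH=0$, $d\xi_a=\iota_aH$, and $\iota_a\xi_b+\iota_b\xi_a=0$, the last being the stated condition $\xi_a(V_b)+\xi_b(V_a)=0$. Thus $H+\phi^a\xi_a$ is a $d_\phi$-closed element of degree $3$ in the Cartan complex of $P$, and, $G$ being connected, all the ``$G$-invariant'' requirements may be read as annihilation by the Lie derivatives $\mathcal{L}_{V_a}$.

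For the free action on $P$ one has the Chern--Weil/Cartan isomorphism $H^*_G(P)\cong H^*(P/G)=H^*(U)$, realized at the cochain level via a principal connection; as $U$ is contractible, $H^3_G(P)=0$, so $H+\phi^a\xi_a=d_\phi\widetilde{B}$ for some degree-$2$ Cartan cochain $\widetilde{B}=B_0+\phi^af_a$, with $B_0$ a $G$-invariant $2$-form and $f_a$ functions making $\phi^af_a$ $G$-invariant. The remaining point is to strip the $\phi$-linear tail. Pick a $G$-invariant principal connection $\vartheta=\vartheta^a\otimes e_a$ on $P$ (possible since $G$ is compact); the pairing $\rho:=f_a\vartheta^a$ of the equivariant $\mathfrak{g}^*$-valued function $(f_a)$ with $\vartheta$ is a $G$-invariant $1$-form with $\iota_{V_b}\rho=f_b$, so
\[ B:=\widetilde{B}+d_\phi\rho=B_0+d\rho \]
is an honest $G$-invariant $2$-form and $d_\phi B=d_\phi\widetilde{B}=H+\phi^a\xi_a$, i.e.\ $dB=H$ and $\iota_aB=-\xi_a$. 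One can also sidestep the black box $H^*_G(P)\cong H^*(U)$ and run the underlying homotopy by hand: average a Poincaré primitive of $H$ to a $G$-invariant $C$ with $dC=H$; then $\alpha_a:=\iota_aC+\xi_a$ are closed $G$-invariant $1$-forms with $\iota_a\alpha_b$ skew in $a,b$ (by the hypothesis), and one corrects $C$ by a closed $G$-invariant $2$-form $\beta$, built from $\vartheta^a$ and the $\alpha_a$, with $\iota_a\beta=-\alpha_a$, taking $B=C+\beta$.

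Uniqueness is immediate: if $B,B'$ are two $G$-invariant solutions then $d_\phi(B-B')=0$ forces $d(B-B')=0$ and $\iota_a(B-B')=0$, and a closed, horizontal, $G$-invariant $2$-form is precisely a basic closed $2$-form, which is the asserted ambiguity. I expect the only genuinely substantive ingredient to be the vanishing of $H^3_G$ for the free action over a contractible base (equivalently, producing $\widetilde{B}$ at all) — the local equivariant Poincaré lemma for free actions; once that is in place, removing the $\phi$-linear tail by adding $d_\phi\rho$ is a short gauge-fixing step, and the rest is Cartan-model bookkeeping together with standard facts about free actions of compact groups.
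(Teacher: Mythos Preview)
The paper does not actually prove this theorem: immediately after the statement it says ``The proof of this theorem can be found in \cite{Wang}'' and moves on. So there is no in-paper argument to compare your proposal against.

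On its own merits, your argument is correct. The unpacking of $d_\phi B=H+\phi^a\xi_a$ into $dB=H$, $\iota_aB=-\xi_a$, and of $d_\phi$-closedness of $H+\phi^a\xi_a$ into $dH=0$, $d\xi_a=\iota_aH$, $\iota_a\xi_b+\iota_b\xi_a=0$, is accurate (so the isotropy hypothesis is indeed part of equivariant closedness, as you observe). Over a contractible $U\subset M/G$ the identification $H^*_G(\pi^{-1}(U))\cong H^*(U)$ for a free action gives $H^3_G=0$ and hence a Cartan primitive $\widetilde B=B_0+\phi^a f_a$; the gauge-fixing step $\widetilde B\mapsto \widetilde B+d_\phi(f_a\vartheta^a)$ using an invariant connection $\vartheta$ is valid because equivariance of $(f_a)$ and $\mathrm{Ad}$-equivariance of $\vartheta$ make $f_a\vartheta^a$ a $G$-invariant $1$-form with $\iota_b(f_a\vartheta^a)=f_b$, killing the $\phi$-tail exactly. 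The uniqueness paragraph is the standard observation that a $d_\phi$-closed genuine $2$-form is closed and horizontal, hence basic. Your alternative ``by hand'' route (average a Poincar\'e primitive, then correct by a closed invariant $2$-form built from $\vartheta^a$ and $\alpha_a=\iota_aC+\xi_a$) is essentially the explicit homotopy underlying the Cartan--Weil isomorphism and is also fine.
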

The proof of this theorem can be found in \cite{Wang} and due to this result, a large part of the cases considered in \cite{BCG1} \cite{Ca} can really be encoded in our model.

In components, the action (\ref{actionII}) is
\begin{eqnarray*}
S_{II}&=&S_I-\frac{1}{2}G_{ab}\phi_{+-}^a\phi_{+-}^b+\frac{1}{2}(\nabla_+\xi_{ai}\psi_-^i-\nabla_-\xi_{ai}\psi_+^i)\phi_{+-}^a-\xi_{ai}F^i\phi_{+-}^a\\
&-&(\psi_+, \mu_a\psi_-)\phi_{+-}^a+\frac{1}{2}G_{ab}\phi_{++}^a\phi^b_{--}+\frac{1}{2}\xi_{ai}V_b^i\phi_{--}^a\phi_{++}^b+\frac{1}{2}\psi_-^i\nabla_-\xi_{ai}\phi_{++}^a
\\&-&\frac{1}{2}\psi_+^i\nabla_+\xi_{ai}\phi_{--}^a-\frac{1}{2}(\mu_a\psi_-, \psi_-)\phi_{++}^a-\frac{1}{2}(\mu_a\psi_+, \psi_+)\phi_{--}^a\\
&+&\eta_+^a[(V_a, \psi_-)-\xi_{ai}\psi_-^i]-\eta_-^a[(V_a, \psi_+)+\xi_{ai}\psi_+^i],
\end{eqnarray*}
where $G_{ab}=(V_a, V_b)$ and $\mu_a \psi_\pm:=-\nabla_{\pm}V_a$ is the Riemannian moment map of the $\mathfrak{g}$-action.
\subsection{Localization}
\label{LII}
In this subsection, we assume that $G$ acts freely and properly on $M$. We shall compute the path-integral $\int dxd\psi_\pm dF d\phi d\eta e^{-S_{II}}$ explicitly. The result shows that the integral over the zero modes is actually the Euler characteristic of the quotient space $M/G$. However, this will be clear only after we would have derived the curvature of the Bismut connection over $M/G$ in Sect.~\ref{Bcur}.
\begin{itemize}
\item Denote $V_a^{\pm}=V_a\pm g^{-1}\xi_a$. The effect of integrating out $\eta_\pm$ is to restrict $\psi_\pm$ to the zero modes:
\begin{equation}(V_a^\pm, \psi_\pm)=0, \quad a=1,\cdots, s.\label{con}\end{equation}
This determines two horizontal distributions $\tau_\pm$ in $M$ viewed as a principal $G$-bundle over $M/G$. We will see later in Sect.~\ref{Bismut} that these also arise naturally in metric reduction in generalized geometry.
\end{itemize}
\begin{itemize}
\item Integrate out $\phi_{++}$. The relevant terms are
\[\frac{1}{2}K_{ab}\phi_{++}^a\phi_{--}^b+\frac{1}{2}\phi_{++}^a(\nabla_-V^-_a, \psi_-),\]
where $K_{ab}=G_{ab}-\xi_{ai}V_b^i=E(V_b,V_a)$. This will give rise to a Dirac $\delta$-function, restricting $\phi_{--}$ to\footnote{Actually, to produce the $\delta$-function, a certain factor $\sqrt{-1}$ should be included consistently, but we won't do this in detail here.}
\begin{equation}\phi_{--}^a=-K^{ab}(\nabla_-V^-_b, \psi_-),\label{phi}\end{equation}
where $K^{ab}$ is the inverse of $K_{ab}$, i.e. $K^{ab}K_{bc}=\delta^a_c$.
\end{itemize}
\begin{itemize}
\item Integrate out $\phi_{--}$. This is equivalent to replacing $\phi_{--}$ in the action by the R.H.S. of Eq.~(\ref{phi}). The relevant terms produce an exponent
    \[-\frac{1}{2}K^{ab}(\nabla_-V^-_b, \psi_-)(\nabla_+V_a^+, \psi_+).\]
\end{itemize}
\begin{itemize}
\item Integrate out $F$ or instead substitute the equation of motion of $F$
\begin{equation}F^k+\frac{1}{2}H_{ij}^{\quad k}\psi_-^i\psi_+^j-\phi_{+-}^a \xi_{ai}g^{ik}=0 \label{EF}\end{equation}
into the following expression:
\[\frac{1}{2}(F+\frac{1}{2}H_{ij}^{\quad .}\psi_-^i\psi_+^j, F+\frac{1}{2}H_{{i'}{j'}}^{\quad \ast}\psi_-^{i'}\psi_+^{j'})-\xi_{ai}F^i\phi_{+-}^a.\]
We get
\[-\frac{1}{2}g^{ij}\xi_{ai}\xi_{bj}\phi_{+-}^a\phi_{+-}^b+\frac{1}{2}H_{jk}^{\quad i}\xi_{ai}\psi_-^j\psi_+^k\phi_{+-}^a.\]
\end{itemize}
\begin{itemize}
\item Integrate out $\phi_{+-}$. The relevant terms are
\[-\frac{1}{2}T_{ab}\phi_{+-}^a\phi_{+-}^b-\frac{1}{2}\phi_{+-}^a[(\nabla_+V_{a}^-,\psi_-)+(\nabla_-V_{a}^+,\psi_+)-H_{ij}^{\quad k}\psi_-^i\psi_+^j\xi_{ak})],\]
where, due to $\xi_a(V_b)=-\xi_b(V_a)$, \[T_{ab}=(V_a^+, V_b^+)=(V_a^-,V_b^-)=G_{ab}+g^{ij}\xi_{ai}\xi_{bj}.\]
These give an exponent of the form
\begin{eqnarray*}&\frac{1}{8}&T^{ab}[(\nabla_+V_{a}^-,\psi_-)+(\nabla_-V_{a}^+,\psi_+)-H_{ij}^{\quad k}\psi_-^i\psi_+^j\xi_{ak})]\\&\times&[(\nabla_+V_{b}^-,\psi_-)+(\nabla_-V_{b}^+,\psi_+)-H_{i'j'}^{\quad l}\psi_-^{i'}\psi_+^{j'}\xi_{bl}]\end{eqnarray*}
where $T^{ab}$ is the inverse of the matrix $T_{ab}$.
\end{itemize}
Combining the above calculations together, we finally get the reduced action involving only the bosonic fields $x^i$ and zero modes of $\psi_\pm$:
\begin{eqnarray}
S_{IR}&=&-\frac{1}{2}(\psi_-, R^- \psi_-)-\frac{1}{2}K^{ab}(\nabla_+V_a^+, \psi_+)(\nabla_-V^-_b, \psi_-)\nonumber\\&+&\frac{1}{8}T^{ab}[(\nabla_+V_{a}^-,\psi_-)+(\nabla_-V_{a}^+,\psi_+)-H_{ij}^{\quad k}\psi_-^i\psi_+^j\xi_{ak})]\nonumber\\&\times&[(\nabla_+V_{b}^-,\psi_-)+(\nabla_-V_{b}^+,\psi_+)-H_{i'j'}^{\quad l}\psi_-^{i'}\psi_+^{j'}\xi_{bl}].\label{CurII}
\end{eqnarray}
This is actually Model I with $M/G$ as the target space. We will show the above expression is precisely the curvature of the Bismut connection on $M/G$.

To find the reduced NS-flux $\tilde{H}$ on $M/G$, we note that, using Eq.~(\ref{EF}), it is obtained that
\begin{equation}d_+\psi_-^i=\phi_{+-}^aV_a^{-i}+\Gamma^{(-)i}_{kj}\psi_-^j\psi_+^k,\label{bis}\end{equation}
where \[\phi_{+-}^a=-\frac{1}{2}T^{ab}[(\nabla_+V_{a}^-,\psi_-)+(\nabla_-V_{a}^+,\psi_+)-H_{ij}^{\quad k}\psi_-^i\psi_+^j\xi_{ak})]\]
and the constraint (\ref{con}) is imposed on $\psi_\pm$. Compared with Eq.~(\ref{onshell}), Eq.~(\ref{bis}) gives the $-$-Bismut connection $\tilde{\nabla}$ over $M/G$ in terms of the connection $\nabla^-$ over $M$ and the two distributions $\tau_\pm$. Then $-\tilde{H}$ is the torsion of $\tilde{\nabla}$. The detailed computation will be carried out in Sect.~\ref{Bismut}.

Before finishing this section, we comment that we have only paid attention to the exponent in the reduced path-integral and ignored the several factors arising from the computation and even haven't fixed the gauge.
\section{Model III}\label{Model3}
In this section, we consider a model with an extra vector bundle $W$ of rank $r$ (together with a generic section $\sigma\in \Gamma(W)$) but without any non-trivial group action. This can be viewed as Model I with the target space being the total space of $W^*$, the dual of $W$.

New superfields $\zeta_\alpha$ ($\alpha=1,2,\cdots, r$) living in $W^*$ should be introduced:
\[\zeta=U+\theta^+\tilde{\chi}_++\theta^-\tilde{\chi}_-+\theta^+\theta^-\tilde{L}.\]
However, to write the action in a covariant way, we have to choose two connections $\nabla^\pm$ in $W$ and introduce new fields $\chi_\pm$, $L$. These are defined as
\[\chi_\pm:=\nabla^\pm U:= d_\pm U+\omega^\pm U=\tilde{\chi}_\pm+\omega^\pm U,\]
and
\[ \quad L=\frac{1}{2}(\nabla^-\chi_+-\nabla^+\chi_-),\]
where $\omega^\pm=\omega^\pm_i\psi_\pm^i$ are the connection forms of $\nabla^\pm$. $L$ and $\tilde{L}$ are related by a more complicated formula, but we won't use it explicitly and so omit it here.

Besides the term of Model I, the action includes a new term involving $\sigma$:
\[S_\sigma(\varphi, \zeta)=\int d\theta^+d\theta^- \sqrt{-1}(\sigma(\varphi), \zeta).\]
We call this Model III. Note that as in \cite{DM}, another term of the form $d_+d_-(\chi_+, \chi_-)$ could be added if $W$ is equipped with a metric. We won't introduce such a term and consequently the superfield $\zeta$ serves only as a Lagrangian multiplier in our model. This is similar to what Blau and Thompson did in \cite{BT1}, where N=2 topological gauge theories were treated in a supersymmetric quantum mechanical formalism.

In components, the action of Model III is
\begin{eqnarray*}S_{III}&=&S_I+\sqrt{-1}[(\sigma,L)+(\nabla^-\sigma, \chi_+)-(\nabla^+\sigma, \chi_-)
\\&-&\frac{1}{2}(\{\nabla_j^+, \nabla_i^-\}\sigma, U)\psi_+^j\psi_-^i+(\overline{\nabla}_i \sigma, U)\tilde{F}^i],\end{eqnarray*}
where $\overline{\nabla}_i=(\nabla^+_i+\nabla^-_i)/2$.

In the following, we consider the localization of this model.
\begin{itemize}
\item Integrating out $L$, we get a Dirac $\delta$-function restricting the bosonic fields $x^i$ to lie in $N:=\sigma^{-1}(0)$.
\end{itemize}
\begin{itemize}
\item Integrate out $\chi_\pm$. Since $\sigma$ is generic, only zero-modes (tangent to $N$) of $\psi_\pm$ are left. After this computation, the terms left in $S_{\sigma}$  are
\begin{equation*}-\sqrt{-1}(\partial_i\partial_j\sigma, U)\psi_+^j\psi_-^i+\sqrt{-1}(\partial_i \sigma, U)(F^i-(\Gamma_-\psi_+)^i).\end{equation*}
Note that the zero modes of $F$ won't contribute to the second term.
\end{itemize}
\begin{itemize}
\item Integrate out the transverse modes of $F$ in the normal directions of $N\subset M$. The equation of motion of such modes of $F$ is
\[F^k+\frac{1}{2}H_{ij}^{\quad k}\psi_-^i\psi_+^j+\sqrt{-1}(\partial_l\sigma, U)g^{lk}=0.\]
The integration gives an exponent
\[\frac{1}{2}T^{\alpha\beta}U_\alpha U_\beta-\frac{\sqrt{-1}}{2}\partial_i\sigma^\alpha U_\alpha H_{kl}^{\quad i}\psi_-^k\psi_+^l,\]
where $T^{\alpha\beta}=\partial_i \sigma^\alpha\partial_j \sigma^\beta g^{ij}$.
\item Integrate out the zero modes of $F$. Since they only occur in a complete square and hence have no contribution at all, they can be simply ignored.
 \item Integrate out $U$. The relevant terms are
 \[\frac{1}{2}T^{\alpha\beta}U_\alpha U_\beta+\sqrt{-1}U_\alpha\nabla_i^+\partial_j\sigma^\alpha\psi_-^i\psi_+^j,\]
 which give rise to the exponent
\begin{equation}\frac{1}{2}T_{\alpha\beta}\nabla_i^+\partial_j\sigma^\alpha\nabla_{i'}^+\partial_{j'}\sigma^\beta \psi_-^i\psi_+^j\psi_-^{i'}\psi_+^{j'},\label{CurIII}\end{equation}
where $T_{\alpha\beta}$ is the inverse of $T^{\alpha\beta}$. Note that here $\nabla^+$ is the Bismut connection $\nabla+\frac{1}{2}g^{-1}H$, rather than the connection $\nabla^+$ in $W$. The above result together with $-(\psi_-, R^- \psi_-)/2$ turns out to be the curvature of the $-$-Bismut connection on $N$. Thus, we again get Model I with the submanifold $N$ being the target space. This will be clear in Sect.~\ref{Bismut}.
\end{itemize}

Now it is easy to find the following on-shell supersymmetric transform
\begin{equation}d_+\psi_-^i=(\Gamma^{(-)i}_{kj}+T_{\alpha\beta}g^{il}\partial_l\sigma^\alpha\nabla^+_j \partial_k \sigma^\beta)\psi_-^j\psi_+^k,\label{sbis1}\end{equation}
where $x^i$ are restricted on $\sigma^{-1}(0)$ and $\psi_\pm$ on the tangent space of $\sigma^{-1}(0)$.

\section{Model IV}\label{M4}
In this section, we shall combine all ingredients discussed before together. The new model should involve the $G$-action and an extra equivariant vector bundle $W$ together with a generic equivariant section. Now the action takes the following form:
\begin{equation}S_{IV}(\varphi, \zeta)=\int d\theta^+d\theta^-[\sqrt{-1}(\sigma, \zeta)+\frac{1}{2}E_{ij}(\varphi)\mathcal{Q}_-\varphi^i\mathcal{Q}_+\varphi^j],\label{IV1}\end{equation}
In terms of components, the full action is
\begin{eqnarray}
S_{IV}&=&S_{III}-\frac{\sqrt{-1}}{2}\phi_{+-}^a V_a^i((\nabla_{-i}-\nabla_{+i})\sigma, U)+\frac{1}{2}\phi_{+-}^a(\nabla_+\xi_{ai}\psi_-^i-\nabla_-\xi_{ai}\psi_+^i)\nonumber\\&-&\phi_{+-}^a\xi_{ai}F^i
-\frac{1}{2}G_{ab}\phi_{+-}^a\phi_{+-}^b+\frac{1}{2}\xi_{ai}V_b^i\phi_{--}^a\phi_{++}^b+\frac{1}{2}G_{ab}\phi_{++}^a\phi_{--}^b\nonumber\\
&-& \phi_{+-}^a (\psi_+, \mu_a \psi_-)+\frac{1}{2}\phi_{++}^a(\nabla_-V_a^-, \psi_-)+\frac{1}{2}\phi_{--}^a(\nabla_+V_a^+, \psi_+)\nonumber\\&+&\eta_+^a(V_a^-, \psi_-)-\eta_-^a(V_a^+, \psi_+).\label{IV2}
\end{eqnarray}

Due to the detailed analysis of former sections, we only discuss the localization of this model very briefly.
\begin{itemize}
\item Integrate out $L$. This restricts the bosonic fields $x^i$ to the zero locus $N$ of $\sigma$.
\item Integrate out $\chi_\pm$. At the same time, non-zero modes of $\psi_\pm$ transverse to $N$ are integrated out and only the zero modes (tangent to $N$) remain.
\item Integrate out $U$ and non-zero modes of $F$ transverse to $N$. This will give rise to Model II with target space being the invariant submanifold $N$.
\item Follow the localization procedures in Sect.~\ref{LII}. This will finally lead to Model I with $N/G$ being the target space.
\end{itemize}

To conclude this section, we remark that in some cases, the existence of the local potentials $B_\alpha$ is not obvious and we can take the component form (\ref{IV2}) of $S_{IV}$ as the starting point. This is the viewpoint we shall take when we come to the balanced topological Yang-Mills theory in Sect.~\ref{GKR}.
\section{Bismut connections on reduced manifolds }\label{Bismut}
In this section, we describe the metric reduction in terms of purely geometric terms, without referring to any physical ideas. The investigation goes in the spirit of \cite{Ca}, but the role played by Bismut connections is emphasized.
\subsection{Basics of generalized Riemannian geometry}
\label{Bcur}
In this subsection, we recall the most relevant aspects of generalized Riemannian geometry. Though we will finally use the equivalent classical description, this is still a motivating starting point. For a detailed account of notions mentioned below, we refer to \cite{Gua1}.

In contrast to ordinary geometry, in generalized geometry, one considers geometric structures defined on the generalized tangent bundle $\mathbb{T}M=TM\oplus T^*M$, or more generally on an exact Courant algebroid $E$ over a smooth manifold $M$. When referring to integrability of a generalized structure, one often uses the Courant bracket $[\cdot,\cdot]_c$ to replace the Lie bracket on $TM$. Besides, $E$ is also equipped with a non-degenerate pairing $\langle\cdot,\cdot\rangle$ and an anchor map $\pi:E\rightarrow TM$.

Given $E$, one can always find an isotropic splitting $s:TM\rightarrow E$, which has a curvature form $H\in \Omega_{cl}^3(M)$ defined by
\[H(X,Y,Z)=\langle[s(X),s(Y)]_c,s(Z)\rangle,\quad X, Y, Z\in \Gamma(TM).\]
There are many different isotropic splittings, but the relevant curvatures lie in the same cohomology class. By the bundle isomorphism $s+\pi^*:TM\oplus T^*M\rightarrow E$, the Courant algebroid structure of $E$ can be transported onto $\mathbb{T}M$. Then the pairing $\langle\cdot,\cdot\rangle$ is the natural one, i.e.
$\langle X+\xi,Y+\eta\rangle=\xi(Y)+\eta(X)$, and the Courant bracket is
\begin{equation}[X+\xi, Y+\eta]_H=[X,Y]+\mathcal{L}_X\eta-\iota_Yd\xi+\iota_Y\iota_XH,\end{equation}
called the $H$-twisted Courant bracket. Different splittings are related by B-field transforms, i.e. $e^B(X+\xi)=X+\xi+\iota_XB$, where $B$ is a 2-form.

A generalized (Riemannian) metric on $E$ is an orthogonal, self-adjoint endmorphism $\mathcal{G}:E\rightarrow E$ such that $\langle \mathcal{G}e,e\rangle > 0$ for nonzero $e\in E$. It is necessary that $\mathcal{G}^2=id$. The $\pm$-eigenbundles $V_\pm\subset E$ are positive and negative subbudles of maximal rank respectively. A generalized metric induces a natural splitting $E=\mathcal{G}(T^*M)\oplus T^*M$. This is called the \emph{metric splitting}.

Given a generalized metric, we shall always choose the metric splitting. Then $E$ is identified with $\mathbb{T}M$, $\mathcal{G}$ is of the form $\left(\begin{array}{cc} 0 & g^{-1} \\g & 0 \\
\end{array} \right)$ where $g$ is an ordinary Riemannian metric, and vectors in $V_\pm$ are of the form $X\pm g(X)$ respectively for $X\in TM$.

 Denote the curvature of the metric splitting by $H$. Then one can define the $\pm$-Bismut connections $\nabla^\pm=\nabla\pm\frac{1}{2}g^{-1}H$, which play a central role in generalized K$\ddot{a}$hler geometry. It was observed in \cite{Hi} \cite{Gu1} that these connections can be expressed using $H$-twisted Courant bracket:
\begin{equation}[X\mp g(X), Y\pm g(Y)]_H^\pm=\nabla_X^\pm Y\pm g(\nabla_X^\pm Y),\label{Hif}\end{equation}
where $(X+\xi)^\pm$ denote the $V_\pm$-part of $X+\xi\in \mathbb{T}M$ w.r.t. the decomposition $E=V_+\oplus V_-$.

\subsection{Bismut connections on the quotient space}
Let $(M,g, H)$ be a generalized Riemannian manifold (we assume a generalized metric is given and the metric splitting is used to identify $E$ with $\mathbb{T}M$ ). We also assume that a compact, connected Lie group $G$ acts freely and properly on $M$ on the left such that both $g$ and $H$ are invariant.

If no flux $H$ is presented, the Riemannian metric on the quotient space $M^{red}:=M/G$ and its associated Levi-Civita connection can be easily described: A connection of the principal $G$-bundle $M\rightarrow M^{red}$ naturally arises from the $G$-invariant metric, i.e. the horizontal distribution is just the orthogonal complement $\mathcal{H}$ of the vertical distribution. The Levi-Civita connection on $M^{red}$ can then be expressed using the Levi-Civita connection on $M$ and the orthogonal projection from $TM$ to $\mathcal{H}$. But if there is a non-trivial NS-flux $H$ on $M$, the natural connections should be the two Bismut connections $\nabla^\pm$. We address the problem of how to obtain the Bismut connections on $M^{red}$ from that on $M$. This is not as directly derived as in the ordinary case and should be motivated by considerations in generalized geometry. Our approach is based on Eq.~(\ref{Hif}) that the Bismut connections can be expressed using Courant bracket: Since by the reduction procedure established in \cite{BCG1}, the Courant algebroid $E^{red}$ on $M^{red}$ can naturally be described in terms of the Courant algebroid $E$ on $M$, one can expect that the Bismut connections on $M^{red}$ could be described in terms of the Courant bracket on $M$.

 Now assume that the action of $\mathfrak{g}$ is extended by the equivariant 1-form $\xi_{(\cdot)}$ such that $H+\phi^a \xi_a$ is equivariantly closed. Let $K$ be the subbundle of $\mathbb{T}M$ generated by $V_a+\xi_a$, and $K^\bot$ be the orthogonal complement in $\mathbb{T}M$ w.r.t. the natural pairing. Let $K^\mathcal{G}$ be the $\mathcal{G}$-orthogonal complement of $K$ in $K^\bot$. Then it was proved in \cite{BCG1} that $E^{red}:=(K^\bot/K)/G$ acquires a structure of exact Courant algebroid derived from $\mathbb{T}M$. The Courant bracket of two sections $A, B\in \Gamma(E^{red})$ is defined using the Courant bracket $[\hat{A}, \hat{B}]_H$ of their (locally) invariant lifts $\hat{A}, \hat{B}$ in $\mathbb{T}M$.

 \emph{There is a natural isomorphism between $K^\mathcal{G}/G$ and $E^{red}$ defined by projection}. The generalized metric on $E^{red}$ is actually the restriction of $\mathcal{G}$ on the subbundle $K^\mathcal{G}\subset K^\bot$. Accordingly, we have the decomposition $K^\mathcal{G}=V_+^{red}\oplus V_-^{red}$, where $V_\pm^{red}=V_\pm \cap K^\mathcal{G}$.
Project $V_\pm^{red}$ to $TM$. Two horizontal distributions on $M$ arise:
\[\tau_\pm:=\{Y\in TM|g(Y, V_a)\pm \xi_a(Y)=0\}.\]
These are just distributions derived in Eq.~(\ref{con}), and precisely the zero-modes of $\psi_\pm$ in the model there. The advantage of identifying $K^\bot/K$ with $K^\mathcal{G}$ is that, when a lift $\hat{A}\in \Gamma(K^\bot)$ of $A\in \Gamma(E^{red})$ is needed, we can choose $\hat{A}$ to be the unique one in $\Gamma(K^\mathcal{G})$.

Let $\tilde{g}$ be the reduced metric on $M^{red}$. Note that $\tilde{g}$ is by definition derived from restricting $\mathcal{G}$ on $V_+^{red}$ (or $V_-^{red}$). This means $\tilde{g}$ is in fact defined by restricting $g$ on $\tau_+$ (or $\tau_-$). This is different from the ordinary case. Let $\tilde{\nabla}$ be the $-$-Bismut connection on $M^{red}$. Then according to Eq.~(\ref{Hif}), in the metric splitting of $E^{red}$,
\[\tilde{\nabla}_{[X]}[Y]-\tilde{g}(\tilde{\nabla}_{[X]}[Y])=[[X]+\tilde{g}([X]), [Y]-\tilde{g}([Y])]_{\tilde{H}}^-,\]
where $[X]$ is a vector field on $M^{red}$ represented by an invariant lift $X$ on $M$.
But the R.H.S. of the above equation can be computed using the corresponding invariant sections of $K^\mathcal{G}$ (this possibility is explained in detail in \cite{Wang}), i.e.
\[[X^++g(X^+), Y^--g(Y^-)]_H^-,\]
where $X^{\pm}$ denote the horizontal lifts of $[X]$ in $\tau_\pm$ respectively. One should note that $\Gamma^G(K^\mathcal{G})$ is not involutive under the Courant bracket. This can hold only up to addition of invariant section of $K$. Therefore,
\[[X^++g(X^+), Y^--g(Y^-)]_H=A_++A_-+N,\]
where $A_\pm\in V_\pm^{red}$ and $N=2c^a(V_a+\xi_a)$ for some functions $c^a$ to be determined. Of course we want to separate $A_-$ from the above expression. We already have
\[[X^++g(X^+), Y^--g(Y^-)]_H^-=A_-+N_-,\]
where $N_-=c^a(V_a-g(V_a)+\xi_a-g^{-1}\xi_a)$.
Hence,
\[A_-+N_-=\nabla_{X^+}^-Y^--g(\nabla_{X^+}^-Y^-).\]
Therefore,
\[\pi_-(A_-)+c^a(V_a-g^{-1}\xi_a)=\pi_-(A_-)+c^aV^-_a=\nabla_{X^+}^-Y^-,\]
where $\pi_-$ is the projection from $V_-$ to $TM$.
Note that we have the orthogonal decomposition
\[TM=\tau_-\oplus \textup{span}\{V^-_a\}.\]
Thus $\pi_-(A_-)$ is in fact the $\tau_-$-part of $\nabla_{X^+}^-Y^-$ w.r.t. this decomposition. We then find
\[c^a=-T^{ab}(Y^-, \nabla_{X^+}^-V^-_b),\]
where $T_{ab}=g(V^-_a, V^-_b)=g(V_a^+,V_b^+)$ and $T^{ab}$ is its inverse. We finally obtain
\begin{equation}\pi_-(A_-)=\nabla_{X^+}^-Y^-+T^{ab}(Y^-, \nabla_{X^+}^-V^-_b)V^-_a.\label{Bis}\end{equation}
This is what we need to express $\tilde{\nabla}$ in terms of $\nabla^-$; in particular, if $[Z]$ is another vector field on $M^{red}$, then
\begin{equation}(\tilde{\nabla}_{[X]}[Y],[Z])=(\nabla_{X^+}^-Y^-+T^{ab}(Y^-, \nabla_{X^+}^-V^-_b)V^-_a,Z^-)=(\nabla_{X^+}^-Y^-, Z^-).\label{rBis}\end{equation}

Now we can turn to the problem of expressing the curvature of $\tilde{\nabla}$ in terms of that of $\nabla^-$. From Eq.~(\ref{Bis}) and Eq.~(\ref{rBis}), we have (for the detail, see \cite{Wang})
\[(\tilde{\nabla}_{[X]}\tilde{\nabla}_{[Y]}[Z],[W])=(\nabla_{X^+}^-\nabla_{Y^+}^-Z^-, W^-)+T^{ab}(Z^-, \nabla_{Y^+}^-V^-_a)(W^-, \nabla_{X^+}^-V^-_b).\]
Again due to Eq.~(\ref{rBis}),
\begin{eqnarray*}(\tilde{\nabla}_{[[X],[Y]]}[Z], [W])&=&(\nabla^-_{[X^+, Y^+]+\Omega_+(X^+, Y^+)}Z^-, W^-)\\
&=&(\nabla^-_{[X^+, Y^+]}Z^-, W^-)+(\nabla^-_{\Omega_+(X^+, Y^+)}Z^-, W^-),\end{eqnarray*}
where we have used the identity\footnote{$\widetilde{[X]}$ denotes the horizontal lift of $[X]$ in $\tau_+$.}
\[[X^+, Y^+]-\widetilde{[[X], [Y]]}=-\Omega_+(X^+, Y^+)=-\Omega^a_+(X^+, Y^+)V_a,\]
and $\Omega^a_+$ is the curvature associated to $\tau_+$. We want to use the data $V_a$, $\xi_a$ to represent $\Omega_+^a$. Let $\theta_+$ be the connection 1-form associated with $\tau_+$. Then
\[\theta_+=\theta_+^ae_a=t^{ba}g(V_b^+)e_a,\]
where $t^{ba}$ is to be determined. We have
\[t^{ba}(V_b^+,V_c)=t^{ba}K_{cb}=\delta^a_c,\]
where $K_{ab}$ is defined in Sect.~\ref{LII}. Then $t^{ba}$ is precisely $K^{ba}$ in Sect.~\ref{LII} and $\theta_+^a=K^{ba}g(V_b^+)$.
\begin{lemma}Let $\Omega_\pm^a$ be the curvatures of $\tau_\pm$. Then
\[\Omega_+^a|_{\tau_+}=K^{ba}d\xi^+_b|_{\tau_+},\quad \Omega_-^a|_{\tau_-}=K^{ab}d\xi^-_b|_{\tau_-},\]
where $\xi^\pm_b=g(V_b^\pm)$.
\end{lemma}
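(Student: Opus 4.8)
The plan is to compute the curvature $\Omega_+^a$ of the horizontal distribution $\tau_+$ directly from its connection $1$-form $\theta_+^a = K^{ba} g(V_b^+) = K^{ba}\xi_b^+$, using the structure equation $\Omega_+^a = d\theta_+^a + \tfrac12 c^a_{bc}\theta_+^b\wedge\theta_+^c$ (with the second term dropping out upon restriction to $\tau_+$, since $\theta_+^b|_{\tau_+}=0$), and then to simplify $d\theta_+^a|_{\tau_+} = d(K^{ba}\xi_b^+)|_{\tau_+}$. First I would observe that $\xi_b^+ := g(V_b^+)$ is the image of the section $V_b^+ = V_b + g^{-1}\xi_b$ of $V_+$, and recall from Sect.~\ref{LII} that $K_{ab} = (V_b^+, V_a) = E(V_b,V_a)$; invertibility of $K_{ab}$ (guaranteed by freeness of the action and positivity of the generalized metric) is what makes $\theta_+^a$ well defined. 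The key point is that the functions $K^{ba}$ appear inside the differential, so $d\theta_+^a = dK^{ba}\wedge\xi_b^+ + K^{ba}d\xi_b^+$, and I must show the first term vanishes on $\tau_+$. This is immediate: $\xi_b^+ = g(V_b^+)$ annihilates $\tau_+$ by the very definition $\tau_+ = \{Y \mid g(Y,V_b) + \xi_b(Y) = 0\} = \ker\{g(V_b^+)\}$, so $\xi_b^+|_{\tau_+} = 0$ and hence $dK^{ba}\wedge\xi_b^+|_{\tau_+} = 0$. This leaves exactly $\Omega_+^a|_{\tau_+} = K^{ba}d\xi_b^+|_{\tau_+}$, which is the first claimed identity (modulo the index placement $K^{ba}$ versus $K^{ab}$, which for the $+$-case is as written).

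For the second identity I would run the symmetric argument with $\tau_-$, whose connection $1$-form is $\theta_-^a = t'^{ba}g(V_b^-)e_a$ for some matrix $t'^{ba}$ determined by $t'^{ba}(V_b^-, V_c) = \delta^a_c$. Here one must be careful: $(V_b^-,V_c) = (V_b - g^{-1}\xi_b, V_c) = G_{bc} - \xi_b(V_c)$, and using the isotropy hypothesis $\xi_b(V_c) + \xi_c(V_b) = 0$ one gets $(V_b^-,V_c) = G_{bc} + \xi_c(V_b) = (V_c^+, V_b) = K_{bc}$ — wait, more carefully $(V_c^+,V_b) = (V_c + g^{-1}\xi_c, V_b) = G_{cb} + \xi_c(V_b) = K_{bc}$, matching. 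So the condition on $t'^{ba}$ reads $t'^{ba}K_{bc} = \delta^a_c$, i.e. $t'^{ba} = K^{ab}$ (transpose of the previous one, reflecting that $K_{ab}$ need not be symmetric once $B\neq 0$). Then the same computation gives $\Omega_-^a|_{\tau_-} = K^{ab}d\xi_b^-|_{\tau_-}$, since $g(V_b^-)|_{\tau_-} = 0$ by the dual definition of $\tau_-$, killing the $dK^{ab}\wedge\xi_b^-$ term. It is worth recording explicitly why $\xi_b^-|_{\tau_-}=0$: $\xi_b^- := g(V_b^-) = g(V_b) - \xi_b$ as a $1$-form, and $\tau_- = \{Y \mid g(Y,V_b) - \xi_b(Y) = 0\}$, so indeed $\xi_b^-(Y) = g(Y,V_b) - \xi_b(Y) = 0$ for $Y\in\tau_-$.

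I expect the main obstacle to be purely bookkeeping rather than conceptual: keeping the two index conventions straight between the $+$ and $-$ cases (the asymmetry $K_{ab}\neq K_{ba}$ when $B\neq 0$ is exactly what forces $K^{ba}$ in one formula and $K^{ab}$ in the other), and checking that the structure-equation quadratic term $\tfrac12 c^a_{bc}\theta_+^b\wedge\theta_+^c$ really does restrict to zero on $\tau_+$ — which it does, trivially, because each factor $\theta_+^b$ vanishes there, so no cancellation between terms is needed. A secondary subtlety is to confirm that $\theta_\pm^a$ as written are genuinely connection $1$-forms for principal connections, i.e. that $\theta_\pm^a(V_b) = \delta^a_b$ (verified above via the $K$-inverse relations) and that they are $G$-equivariant (which follows from $G$-invariance of $g$, $H$, and $\xi_{(\cdot)}$). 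Once these checks are in place the lemma follows by one line of exterior calculus in each case.
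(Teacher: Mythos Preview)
Your proposal is correct and follows essentially the same approach as the paper. The paper evaluates $d\theta_+^a$ on horizontal vectors via the Cartan formula $d\theta_+^a(X^+,Y^+)=-\theta_+^a([X^+,Y^+])$ (using $\theta_+^a(X^+)=\theta_+^a(Y^+)=0$) and then identifies $-K^{ba}\xi_b^+([X^+,Y^+])$ with $K^{ba}d\xi_b^+(X^+,Y^+)$ by the same formula applied to $\xi_b^+$; you instead expand $d(K^{ba}\xi_b^+)$ by the Leibniz rule and kill the $dK^{ba}\wedge\xi_b^+$ term using $\xi_b^+|_{\tau_+}=0$ --- the same key observation, packaged slightly differently. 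Your explicit treatment of the $\tau_-$ case, in particular the check that $(V_b^-,V_c)=K_{bc}$ forcing the transposed inverse $K^{ab}$, is a useful addition the paper leaves implicit.
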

\begin{proof}We only compute $\Omega_+^a$. The computation for $\Omega_-^a$ is similar.
\begin{eqnarray*}
\Omega_+^a(X^+, Y^+)&=&d\theta_+^a(X^+, Y^+)=X^+\theta_+^a(Y^+)-Y^+\theta_+^a(X^+)-\theta_+^a([X^+, Y^+])\\
&=&-\theta_+^a([X^+, Y^+])=-K^{ba}(V_b^+,[X^+, Y^+])\\
&=&K^{ba}(d\xi^+_b)(X^+, Y^+).
\end{eqnarray*}
\end{proof}
We then have
\[(\nabla_{\Omega_+(X^+, Y^+)}^-Z^-, W^-)=K^{ba}(d\xi^+_b)(X^+, Y^+)(\nabla_{V_a}^-Z^-, W^-).\]
Note that \begin{eqnarray*}(\nabla_{V_a}^-Z^-, W^-)&=&(\nabla_{V_a}Z^-,W^-)-\frac{1}{2}H(V_a, Z^-, W^-)\\
&=&(\nabla_{V_a}Z^-,W^-)-\frac{1}{2}(d\xi_a)(Z^-, W^-),\end{eqnarray*}
\begin{eqnarray*}(\nabla_{V_a}Z^-,W^-)&=&(\nabla_{Z^-}V_a, W^-)=Z^-(g(V_a)(W^-))-(V_a, \nabla_{Z^-}W^-)\\
&=&Z^-(g(V_a)(W^-))-(V_a, \nabla_{W^-}Z^-)-g(V_a)([Z^-, W^-])\\
&=&Z^-(g(V_a)(W^-))-W^-(g(V_a)(Z^-))+(\nabla_{W^-}V_a, Z^-)\\&-&g(V_a)([Z^-, W^-])\\
&=&dg(V_a)(Z^-, W^-)+(\nabla_{W^-}V_a, Z^-),
\end{eqnarray*}
and
\[(\nabla_{V_a}Z^-,W^-)+(\nabla_{W^-}V_a, Z^-)=0.\]
Then we have
\begin{equation}(\nabla_{V_a}^-Z^-, W^-)=\frac{1}{2}d\xi^-_a(Z^-, W^-).\label{3form}\end{equation}

Now we can finally find the curvature $\tilde{R}$ of $\tilde{\nabla}$ in terms of $R^-$.
\begin{theorem}The curvature $\tilde{R}$ of $\tilde{\nabla}$ is
\begin{eqnarray*}(\tilde{R}([X], [Y])[Z], [W])&=&(R^-(X^+, Y^+)Z^-, W^-)\\
&-&\frac{K^{ab}}{2}(d\xi^+_a)(X^+, Y^+)(d\xi^-_b)(Z^-, W^-)\\&+&T^{ab}[(Z^-, \nabla_{Y^+}^-V^-_a)(W^-, \nabla_{X^+}^-V^-_b)-(X\leftrightarrow Y)].
\end{eqnarray*}
\end{theorem}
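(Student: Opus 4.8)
The plan is to expand the defining expression
$\tilde R([X],[Y])[Z]=\tilde\nabla_{[X]}\tilde\nabla_{[Y]}[Z]-\tilde\nabla_{[Y]}\tilde\nabla_{[X]}[Z]-\tilde\nabla_{[[X],[Y]]}[Z]$, pair it with $[W]$, and substitute the three identities already established in this subsection: the formula for the iterated covariant derivative $(\tilde\nabla_{[X]}\tilde\nabla_{[Y]}[Z],[W])$, the formula for $(\tilde\nabla_{[[X],[Y]]}[Z],[W])$ coming from Eq.~(\ref{rBis}) together with the curvature $\Omega_+$ of $\tau_+$, and the Lemma together with Eq.~(\ref{3form}). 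All of these ingredients are tensorial in the lifts $X^+,Y^+,Z^-,W^-$, so the resulting identity does not depend on the choice of invariant representatives and it suffices to verify it for one choice.

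Concretely, I would first antisymmetrise the iterated--derivative formula in $X\leftrightarrow Y$, which isolates the difference $(\nabla^-_{X^+}\nabla^-_{Y^+}Z^--\nabla^-_{Y^+}\nabla^-_{X^+}Z^-,W^-)$ together with the term $T^{ab}\big[(Z^-,\nabla^-_{Y^+}V^-_a)(W^-,\nabla^-_{X^+}V^-_b)-(X\leftrightarrow Y)\big]$, the latter being exactly the last line of the theorem. I would then treat $-(\tilde\nabla_{[[X],[Y]]}[Z],[W])$ via the splitting $(\tilde\nabla_{[[X],[Y]]}[Z],[W])=(\nabla^-_{[X^+,Y^+]}Z^-,W^-)+(\nabla^-_{\Omega_+(X^+,Y^+)}Z^-,W^-)$: the first summand combines with the two iterated covariant derivatives above to produce precisely $(R^-(X^+,Y^+)Z^-,W^-)$, by the very definition of the curvature of $\nabla^-$. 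For the second summand I would insert $\Omega_+^a(X^+,Y^+)=K^{ba}(d\xi^+_b)(X^+,Y^+)$ from the Lemma and then Eq.~(\ref{3form}), $(\nabla^-_{V_a}Z^-,W^-)=\tfrac12 d\xi^-_a(Z^-,W^-)$; after relabelling the dummy indices $a\leftrightarrow b$ this yields $\tfrac12 K^{ab}(d\xi^+_a)(X^+,Y^+)(d\xi^-_b)(Z^-,W^-)$, which enters $\tilde R$ with a minus sign and is the middle line of the theorem. Collecting the three contributions gives the asserted formula.

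The step that carries the real content is the iterated--derivative identity $(\tilde\nabla_{[X]}\tilde\nabla_{[Y]}[Z],[W])=(\nabla^-_{X^+}\nabla^-_{Y^+}Z^-,W^-)+T^{ab}(Z^-,\nabla^-_{Y^+}V^-_a)(W^-,\nabla^-_{X^+}V^-_b)$ quoted above from \cite{Wang}: one feeds $\tilde\nabla_{[Y]}[Z]=\pi_-(A_-)=\nabla^-_{Y^+}Z^-+T^{ab}(Z^-,\nabla^-_{Y^+}V^-_b)V^-_a$ of Eq.~(\ref{Bis}) back into Eq.~(\ref{rBis}); since $\nabla^-$ is metric, the derivative of the coefficient $T^{ab}(Z^-,\nabla^-_{Y^+}V^-_b)$ pairs with $W^-$ only through $(V^-_a,W^-)$, which vanishes by the orthogonal decomposition $TM=\tau_-\oplus\textup{span}\{V^-_a\}$, and the surviving term is rearranged using the symmetry of $T^{ab}$ and of the pairing. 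Once that identity is available the remainder is bookkeeping; the only points needing care are tracking which slot of the non-symmetric matrix $K^{ab}$ (non-symmetric precisely because $\xi_a(V_b)=-\xi_b(V_a)\neq 0$ in general) carries which index, and checking that no cross terms between $\nabla^-_{[X^+,Y^+]}$ and the $T$-corrections arise — which is automatic because $\tilde\nabla_{[[X],[Y]]}$ is evaluated directly through Eq.~(\ref{rBis}) rather than through Eq.~(\ref{Bis}).
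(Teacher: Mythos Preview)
Your proposal is correct and follows essentially the same route as the paper: the paper assembles the curvature from exactly the three ingredients you name---the iterated-derivative identity (quoted from \cite{Wang}), the splitting of $(\tilde\nabla_{[[X],[Y]]}[Z],[W])$ via $\widetilde{[[X],[Y]]}=[X^+,Y^+]+\Omega_+(X^+,Y^+)$, and the combination of the Lemma with Eq.~(\ref{3form})---and then collects terms just as you describe. Your supplementary justification of the iterated-derivative formula (feeding Eq.~(\ref{Bis}) into Eq.~(\ref{rBis}) and using $(V^-_a,W^-)=0$) is exactly the argument the paper defers to \cite{Wang}; note that the vanishing of the coefficient-derivative term comes simply from the Leibniz rule together with the orthogonality $(V^-_a,W^-)=0$, and metricity of $\nabla^-$ is not actually needed there.
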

To see this coincides with the computation in Sect.~\ref{LII}, note that for $\nabla$ we have $d\xi^\pm_a=\nabla\xi^\pm_a$ and that
\begin{eqnarray*}(Z^-, \nabla_{Y^+}^-V^-_a)&=&(Z^-, \nabla_{Y^+}V^-_a)-\frac{1}{2}H(Y^+,V^-_a,Z^-)\\
&=&\frac{1}{2}(Z^-, \nabla_{Y^+}V^-_a)+\frac{1}{2}(Z^-, \nabla_{Y^+}V_a)-\frac{1}{2}(Z^-, \nabla_{Y^+}g^{-1}\xi_a)\\
&+&\frac{1}{2}(\nabla \xi_a)(Y^+, Z^-)-\frac{1}{2}H(g^{-1}\xi_a, Y^+,Z^-)\\
&=&\frac{1}{2}(Z^-, \nabla_{Y^+}V^-_a)-\frac{1}{2}(\nabla_{Z^-}V_a, Y^+)-\frac{1}{2}(Y^+, \nabla_{Z^-}g^{-1}\xi_a)\\
&-&\frac{1}{2}H(g^{-1}\xi_a, Y^+,Z^-)\\
&=&\frac{1}{2}(Z^-, \nabla_{Y^+}V^-_a)-\frac{1}{2}(\nabla_{Z^-}V_a^+, Y^+)-\frac{1}{2}H(g^{-1}\xi_a, Y^+,Z^-).\end{eqnarray*}
These are enough to identify the above curvature with Eq.~(\ref{CurII}). The above computation also reveals that Eq.~(\ref{Bis}) coincides with Eq.~(\ref{bis}).

In the remainder of this subsection, we compute the reduced NS-flux $\tilde{H}$ on $M^{red}$. The appearance of the final expression depends on which distribution is used to model $TM^{red}$. There are several natural choices (at least three, namely $\tau_\pm$ and $\tau$ in \cite[Prop.~4.2]{Ca}) to achieve this purpose. However, in the literature, such as \cite{Ca} and \cite{Hi2}, $\tau_+$ was often used for this purpose and we shall follow this convention to compare different viewpoints towards this computation.

We have an analogue of Eq.~(\ref{rBis}) to express the $+$-Bismut connection $\hat{\nabla}$ on $M^{red}$, i.e.
\[(\hat{\nabla}_{[X]}[Y],[Z])=(\nabla_{X^-}^+Y^+, Z^+).\]
Note that $\tilde{H}$ is the torsion of $\hat{\nabla}$, namely
\[\tilde{H}([X],[Y],[Z])=(\hat{\nabla}_{[X]}[Y],[Z])-(\hat{\nabla}_{[Y]}[X],[Z])-([[X],[Y]],[Z]).\]
It is elementary to find
\[X^-=X^++2K^{ab}\xi_b(X^+)V_a,\]
and an analogue of Eq.~(\ref{3form}):
\[(\nabla_{V_a}^+Y^+, Z^+)=\frac{1}{2}d\xi^+_a(Y^+, Z^+).\]
Besides, we have
\begin{eqnarray*}([[X],[Y]],[Z])&=&([X^+,Y^+]+\Omega_+^a(X^+,Y^+)V_a, Z^+)\\
&=&([X^+,Y^+],Z^+)-\Omega_+^a(X^+,Y^+)\xi_a(Z^+),\end{eqnarray*}
where we have used $(V_a^+, Z^+)=0$.

Piecing these formulae together, we have
\begin{eqnarray*}\tilde{H}([X],[Y],[Z])&=&H(X^+,Y^+,Z^+)+K^{ab}\xi_b(X^+)d\xi_a^+(Y^+,Z^+)\\&-&K^{ab}\xi_b(Y^+)d\xi_a^+(X^+,Z^+)
+\Omega_+^a(X^+,Y^+)\xi_a(Z^+)\\
&=&H(X^+,Y^+,Z^+)+\Omega_+^a(Y^+,Z^+)\xi_a(X^+)\\&-&\Omega_+^a(X^+,Z^+)\xi_a(Y^+)
+\Omega_+^a(X^+,Y^+)\xi_a(Z^+)\\
&=&(H+\Omega_+^a\wedge\xi_a)(X^+,Y^+,Z^+),
\end{eqnarray*}
where we have used the formula $\Omega^a_+(X^+, Y^+)=K^{ba}(d\xi^+_b)(X^+, Y^+)$ derived before. The result coincides with the computation in \cite{Ca}.
\subsection{Bismut connections on the submanifold}
In this subsection, we express the curvature of the $-$-Bismut connection on $N=\sigma^{-1}(0)$ in Sect.~\ref{Model3} in terms of the curvature of $\nabla^-$ on $M$. The discussion is along the same line of the former subsection, but is much easier to carry out. In this cotangent action case, the Courant algebroid over $M$ descends rather directly to $N$. In present setting, $K|_N$ is the co-normal bundle generated locally by $\{d\sigma^\alpha\}$. Hence $K^\bot=TN\oplus T^*M|_N$ and
\[K^\mathcal{G}=\{X+\xi|X\in TN,\quad \xi\in T^*M|_N, \quad g(\xi, d\sigma^\alpha)|_N=0\}\]
can still be identified with $K^\bot/K|_N$. Note that as an exact Courant algebroid over $N$, $K^\mathcal{G}$ is already in the metric splitting and the corresponding curvature is $H|_N$, where $H$ is the curvature of the metric splitting of $E$ over $M$.

On $N$, the $-$-Bismut connection must satisfy
\[\tilde{\nabla}_{\bar{X}}\bar{Y}-g(\tilde{\nabla}_{\bar{X}}\bar{Y})=[\bar{X}+g(\bar{X}), \bar{Y}-g(\bar{Y})]_{H|_N}^-,\]
where $\bar{X}$, $\bar{Y}$ are tangent vector fields over $N$. The R.H.S. of the above equation can be computed using its extension in $\Gamma(K^\mathcal{G})$:
\[[X+g(X), Y-g(Y)]_H^-,\]
where $X$, $Y$ are (local) extensions of $\bar{X}$, $\bar{Y}$ respectively. But $K^\mathcal{G}$ is involutive, under the Courant bracket, only up to addition of a section of $K$. Therefore,
\[[X+g(X), Y-g(Y)]_H=A_++A_-+P\quad on\quad N,\]
where $A_\pm\in \Gamma(V_\pm\cap K^\bot)$, and $P=2c_\alpha d\sigma^\alpha|_N$ for $c_\alpha$ to be determined. We shall separate $A_-$ from the above expression. We already have
\[[X+g(X), Y-g(Y)]_H^-=A_-+P_-=\nabla_X^-Y-g(\nabla_X^-Y)\quad on\quad N,\]
where
\[P_-=c_\alpha(-g^{-1}d\sigma^\alpha+d\sigma^\alpha)|_N.\]
Therefore,
\[\pi_-(A_-)-c_\alpha g^{-1}d\sigma^\alpha=\nabla_X^-Y,\quad on\quad N\]
where $\pi_-: V_-\rightarrow TM$ is the projection,
and
\[(\pi_-(A_-), g^{-1}d\sigma^\beta)|_N-c_\alpha(g^{-1}d\sigma^\alpha, g^{-1}d\sigma^\beta)|_N=(\nabla_X^-Y, g^{-1}d\sigma^\beta)|_N,\]
i.e.
\[-c_\alpha T^{\alpha\beta}=d\sigma^\beta(\nabla_X^-Y)|_N,\]
where $T^{\alpha\beta}$ is defined as in Sect.~\ref{Model3}: $T^{\alpha\beta}=g(d\sigma^\alpha, d\sigma^\beta)|_N$.

We thus find
\[c_\alpha=T_{\alpha\beta}(Y, \nabla_X^- d\sigma^\beta)|_N.\]
Therefore,
\begin{equation}\pi_-(A_-)|_N=\nabla_X^-Y|_N+T_{\alpha\beta}(Y, \nabla_X^- d\sigma^\beta)|_N(g^{-1}d\sigma^\alpha)|_N.\label{sbis2}\end{equation}
This is what we need to express $\tilde{\nabla}$ on $N$ in terms of $\nabla^-$ on $M$; in particular, if $\bar{Z}$ is another vector field over $N$, then
\[(\tilde{\nabla}_{\bar{X}}\bar{Y}, \bar{Z})=(\nabla_X^-Y+T_{\alpha\beta}(Y, \nabla_X^- d\sigma^\beta)g^{-1}d\sigma^\alpha, Z)|_N=(\nabla_X^-Y, Z)|_N.\]
It is not hard to obtain
\[(\tilde{\nabla}_{\bar{X}}\tilde{\nabla}_{\bar{Y}}\bar{Z},\bar{W})=(\nabla_X^-\nabla_Y^- Z, W)|_N+T_{\alpha\beta}(Z, \nabla_Y^-d\sigma^\beta) (W,\nabla_X^-d\sigma^\alpha)|_N.\]
We also have the simple identity
\[(\tilde{\nabla}_{[\bar{X},\bar{Y}]}\bar{Z}, \bar{W})=(\nabla_{[X, Y]}^-Z, W)|_N.\]
Combining all we have got above together, we obtain
\begin{theorem}The curvature $\tilde{R}$ of $\tilde{\nabla}$ is
\begin{eqnarray*}(\tilde{R}(\bar{X}, \bar{Y})\bar{Z}, \bar{W})&=&(R^-(X, Y)Z, W)|_N\\&+&T_{\alpha\beta}[(Z, \nabla_Y^- d\sigma^\beta)(W, \nabla_X^- d\sigma^\alpha)-(Z, \nabla_X^- d\sigma^\beta)(W, \nabla_Y^- d\sigma^\alpha)]|_N.\end{eqnarray*}
\end{theorem}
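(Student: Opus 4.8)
The plan is to assemble the theorem from the three auxiliary identities established immediately above it, so the only genuine work is to justify the second of these and to check that the correction term in Eq.~(\ref{sbis2}) propagates correctly into the iterated covariant derivative. First I would note that the left-hand side $(\tilde R(\bar X,\bar Y)\bar Z,\bar W)$ is tensorial and depends only on the values of the fields along $N$, so it is legitimate to fix arbitrary local extensions $X,Y,Z,W$ of $\bar X,\bar Y,\bar Z,\bar W$ to a neighbourhood in $M$, each tangent to $N$ along $N$. With such extensions $[\bar X,\bar Y]$ is the restriction of $[X,Y]$, which is itself tangent to $N$, so $(\tilde\nabla_{[\bar X,\bar Y]}\bar Z,\bar W)=(\nabla^-_{[X,Y]}Z,W)|_N$ follows at once from $(\tilde\nabla_{\bar X}\bar Y,\bar Z)=(\nabla^-_X Y,Z)|_N$.

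The heart of the matter is $(\tilde\nabla_{\bar X}\tilde\nabla_{\bar Y}\bar Z,\bar W)$. Here I would set $\bar U:=\tilde\nabla_{\bar Y}\bar Z$ and, using Eq.~(\ref{sbis2}) with the roles of $X,Y$ played by $Y,Z$, choose as an extension the vector field $U:=\nabla^-_Y Z+T_{\alpha\beta}(Z,\nabla^-_Y d\sigma^\beta)\,g^{-1}d\sigma^\alpha$ on $M$, whose restriction to $N$ is $\bar U$. Then $(\tilde\nabla_{\bar X}\bar U,\bar W)=(\nabla^-_X U,W)|_N$, and a Leibniz expansion of $\nabla^-_X U$ produces $(\nabla^-_X\nabla^-_Y Z,W)|_N$ plus two correction pieces: one in which $\nabla^-_X$ differentiates the scalar coefficient $T_{\alpha\beta}(Z,\nabla^-_Y d\sigma^\beta)$, and one in which it differentiates $g^{-1}d\sigma^\alpha$. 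The first piece carries the factor $(g^{-1}d\sigma^\alpha,W)|_N=d\sigma^\alpha(W)|_N$, which vanishes because $\bar W$ is tangent to $N$; the second, using that $\nabla^-$ is metric and hence commutes with $g^{-1}$, equals $T_{\alpha\beta}(Z,\nabla^-_Y d\sigma^\beta)\,(g^{-1}\nabla^-_X d\sigma^\alpha,W)|_N = T_{\alpha\beta}(Z,\nabla^-_Y d\sigma^\beta)(W,\nabla^-_X d\sigma^\alpha)|_N$. This is exactly the auxiliary identity quoted for $(\tilde\nabla_{\bar X}\tilde\nabla_{\bar Y}\bar Z,\bar W)$.

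Finally I would form $(\tilde R(\bar X,\bar Y)\bar Z,\bar W)=(\tilde\nabla_{\bar X}\tilde\nabla_{\bar Y}\bar Z,\bar W)-(\tilde\nabla_{\bar Y}\tilde\nabla_{\bar X}\bar Z,\bar W)-(\tilde\nabla_{[\bar X,\bar Y]}\bar Z,\bar W)$ and substitute the three identities. The plain terms $(\nabla^-_X\nabla^-_Y Z,W)-(\nabla^-_Y\nabla^-_X Z,W)-(\nabla^-_{[X,Y]}Z,W)$ collapse to $(R^-(X,Y)Z,W)|_N$ by the definition of the curvature of $\nabla^-$, while the correction pieces antisymmetrize into $T_{\alpha\beta}[(Z,\nabla^-_Y d\sigma^\beta)(W,\nabla^-_X d\sigma^\alpha)-(Z,\nabla^-_X d\sigma^\beta)(W,\nabla^-_Y d\sigma^\alpha)]|_N$, which is the claimed formula. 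The one place demanding care — and the main, if modest, obstacle — is the bookkeeping in this Leibniz expansion: one must verify that the coefficient-derivative term genuinely drops out (it does, by tangency of $\bar W$ to $N$) and that $\nabla^-$ may be pulled through $g^{-1}$ (it may, by metric compatibility of the Bismut connection), since otherwise spurious terms would survive and spoil the identification.
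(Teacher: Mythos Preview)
Your proposal is correct and follows essentially the same approach as the paper: you assemble the curvature from the three auxiliary identities for $(\tilde\nabla_{\bar X}\tilde\nabla_{\bar Y}\bar Z,\bar W)$, $(\tilde\nabla_{[\bar X,\bar Y]}\bar Z,\bar W)$, and $(\tilde\nabla_{\bar X}\bar Y,\bar Z)$, exactly as the paper does with the phrase ``Combining all we have got above together.'' The paper merely asserts the second identity as ``not hard to obtain''; your Leibniz expansion, with the observation that the coefficient-derivative term is killed by $d\sigma^\alpha(\bar W)|_N=0$ and that metric compatibility of $\nabla^-$ lets you pass it through $g^{-1}$, is precisely the computation the paper leaves implicit.
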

To see this coincides with Eq.~(\ref{CurIII}), one only needs to note that
\begin{eqnarray*}(Z, \nabla_Y^- d\sigma^\beta)|_N&=&(Z, \nabla_Y d\sigma^\beta)|_N-\frac{1}{2}H(Y, g^{-1}d\sigma^\beta, Z)|_N\\
&=& (\nabla d\sigma^\beta)(Y,Z)+(Y, \nabla_Z d\sigma^\beta)|_N+\frac{1}{2}H(Z, g^{-1}d\sigma^\beta, Y)|_N\\
&=&(Y, \nabla_Z d\sigma^\beta)|_N+\frac{1}{2}H(Z, g^{-1}d\sigma^\beta, Y)|_N\\
&=&(Y,\nabla_Z^+ d\sigma^\beta)|_N,
\end{eqnarray*}
where we have used the fact that $\nabla d\sigma^\beta=d^2\sigma^\beta=0$. The above computation also reveals that Eq.~(\ref{sbis2}) really coincides with Eq.~(\ref{sbis1}).
\section{Generalized K$\ddot{a}$hler reduction}
\label{GKR}
At first glance, to encode generalized K$\ddot{a}$hler reduction in our Model IV, one should extend N=2 supersymmetry to N=4, just as what physicists do in 2-dimensional supersymmetric $\sigma$-models. However, it is not the case here. N=4 supersymmetry is a too strong constraint to impose. In the reduced model, the field content is also reduced--the contribution of $W$ and $\sigma$ is just to single out the submanifold $\sigma^{-1}(0)$, and after that, no freedom from the fibers of $W$ remains in the reduced model. What we really need is to make the reduced model, rather than the original model, to have N=4 supersymmetry.

On the mathematical side, in terms of generalized geometry, a generalized K$\ddot{a}$hler structure is a generalized Riemannian manifold equipped with a compatible generalized complex structure $\mathcal{J}_1$ such that both $\mathcal{J}_1$ and $\mathcal{G}\mathcal{J}_1$ as almost generalized complex structures are integrable. In the case of a generalized K$\ddot{a}$hler manifold $M$ carrying a trivially extended action $V_a+\xi_a$ (we assume that this infinitesimal action preserves the generalized K$\ddot{a}$hler structure and can be integrated to a group action) and a moment map $\mu$ (an equivariant map from $M$ to the dual space $V^*$ of a $\mathfrak{g}$-module $V$) with $0$ as a regular value, one first singles out the submanifold $N=\mu^{-1}(0)$ and forms the bundle $K$ over $N$, locally generated by $V_a+\xi_a$ and $d\mu^\alpha$. Then $K^\bot$ is again defined as the orthogonal complement of $K$ in $\mathbb{T}M|_N$ and one gets the important bundle $K^\mathcal{G}=K^\bot \cap \mathcal{G}(K^\bot)$ over $N$. A sufficient condition for the generalized K$\ddot{a}$hler structure to descend to $N/G$ is that $\mathcal{J}_1$ preserves $K^\mathcal{G}$, i.e.,
\begin{equation}\mathcal{J}_1K^\mathcal{G}=K^\mathcal{G}.\label{kcon}\end{equation}
One can carry out the metric reduction first and find that Eq.~(\ref{kcon}) ensures that the reduced Courant algebroid acquire an almost generalized K$\ddot{a}$hler structure. As for integrability of this structure, it stems from the integrability of a general reduced Dirac structure in the context of Courant reduction. Much more details of this generalized geometric approach to generalized K$\ddot{a}$hler reduction can be found in \cite{BCG1} \cite{Ca}.

Let us explain the reduction procedure in some detail from another angle. In terms of more familiar ordinary geometric notions, $K$ and $\mathcal{G}$ determine two horizontal distributions $\tau_\pm$ over $N$ viewed as a principal $G$-bundle. Then Eq.~(\ref{kcon}) is just that $J_\pm$ preserves the distributions $\tau_\pm$ respectively, i.e.
\begin{equation}J_\pm \tau_\pm=\tau_\pm,\label{kc}\end{equation} where $J_\pm$ are the two complex structures underlying the biHermitian description of the generalized K$\ddot{a}$hler structure. As $T(N/G)$ is modeled on $\tau_\pm$, Eq.~(\ref{kc}) implies that $N/G$ acquires two almost complex structures compatible with the reduced metric $\tilde{g}$. This is the viewpoint of \cite{Ca} towards generalized K$\ddot{a}$hler reduction.

To see $N/G$ is really generalized K$\ddot{a}$hlerian in the spirit of \cite{Ca} needs more effort: Let $\tilde{J}_\pm$ be the reduced almost complex structures on $N/G$. Firstly, one should prove that $\tilde{J}_\pm$ are flat w.r.t. the reduced $\pm$-Bismut connections respectively. This can be easily achieved by using the formulae relating the Bismut connections in $M$ and in $N/G$. Secondly, one should prove that the reduced NS-flux $\tilde{H}$ is of type $(1,2)+(2,1)$ w.r.t. both $\tilde{J}_\pm$. For $\tilde{J}_+$, from the formula $\tilde{H}=(H+\Omega^a_+\wedge \xi_a)|_{\tau_+}$, it's enough to prove that the curvature $\Omega_+$ is of type $(1,1)$ w.r.t. $\tilde{J}_+$. The details of this computation and further discussions can be found in \cite{Wang}. The conclusion for $\tilde{J}_-$ holds similarly.

Now we come back to our Model IV and see briefly how the above generalized K$\ddot{a}$hler reduction is realized physically. Since we are only concerned with freedoms in the reduced space $\sigma^{-1}(0)/G$, we won't bother ourselves to consider the extended supersymmetric transform of $\zeta$ and $\phi$. As in the non-gauged model, the extended supersymmetric transform of $\varphi^i$ is
\[\delta'_\epsilon \varphi^i=\epsilon^+ J_{+j}^i\mathcal{Q}_+\varphi^j+\epsilon^-J_{-j}^i\mathcal{Q}_-\varphi^j.\]In components, this supersymmetric transform of $\varphi^i$ is
\[d_\pm'x^i=J_{\pm j}^i \psi_\pm^j,\quad d_{\pm}'\psi_\pm^i=-J_{\pm j}^iV_a^j \phi_{\pm\pm}^a-J_{\pm j,k}^i\psi^k_\pm\psi^j_\pm,\]
\[d_\pm' \psi_\mp^i=-J_{\pm j}^i(\phi_{+-}^aV_a^j\pm \tilde{F}^j)-J_{\pm j,k}^i\psi_\mp^k\psi_\pm^j,\]
\begin{eqnarray*}d_\pm' \tilde{F}^i&=&J_{\pm j}^i(2\eta_{\pm}^aV_a^j\pm \phi_{\pm\pm}^aV_{a,k}^j\psi_\mp^k\mp \phi_{+-}^aV_{a,k}^j\psi_\pm^k)\\&+&J_{\pm j,k}^i(\tilde{F}^k\psi_\pm^j-\tilde{F}^j\psi_\pm^k\mp\phi_{+-}^aV_a^j\psi_\pm^k\pm\phi_{\pm\pm}^aV_a^j\psi_\mp^k).\end{eqnarray*}

In the reduced model, one should only pay attention to bosonic freedoms transverse to $G$-orbits in $\sigma^{-1}(0)$ and fermionic freedoms satisfying
\[(V^\pm_a, \psi_\pm)=0,\quad \partial_i\sigma^\alpha\psi_\pm^i=0.\]
To make such $(x^i, \psi_\pm^i)$ form an on-shell supermultiplet of N=4 algebra, $J_\pm \tau_\pm=\tau_\pm$ is a natural constraint to impose on the zero modes of $\psi_\pm$ and our former discussion really assures that the reduced complex structures $\tilde{J}_\pm$ is enough to realize on-shell N=4 supersymmetry in the reduced model.
\section{The moduli space of instantons}\label{MOI}
In this section, as a non-trivial application of the formalism developed in former sections, we account for the generalized K$\ddot{a}$hler structure on the moduli space of instantons investigated in \cite {Hi2} \cite{BCG2}. As in current literature the origin of this structure is mathematically explained very clearly, we content ourselves with writing down the action.

Let $G$ be compact, connected and semi-simple, and $P\rightarrow M$ be a principal $G$-bundle over a smooth oriented generalized Riemannian 4-manifold $(M, g,H)$. Let $\mathfrak{g}_P$ be the adjoint bundle associated to $P$. The space $\mathcal{A}$ of all connections on $P$ is an affine space modeled on $\Omega^1(\mathfrak{g}_P)$. The gauge group $\mathfrak{G}$ acts on $\mathcal{A}$ and has $\Gamma(\mathfrak{g}_P)$ as its Lie algebra. Note that on $\Omega^\bullet(\mathfrak{g}_P)$, we have the natural gauge invariant metric
\[(\alpha, \beta)=\int_M \kappa(\alpha, \ast \beta),\]
where $\kappa$ is the metric on $\mathfrak{g}$ induced from the Killing form of $\mathfrak{g}$ and $\ast$ is the Hodge star operator associated to $g$. The moduli space $\mathcal{M}$ of instantons is obtained from $\mathcal{A}$ by first imposing the anti-self-dual (ASD) equation $F_A^+=0$ and then quotienting by $\mathfrak{G}$\footnote{In fact, to have a smooth structure on $\mathcal{M}$, one should start from the subspace $\mathcal{A}^*$ of irreducible connections rather than the whole of $\mathcal{A}$.}.

For $\gamma \in \textup{Lie}(\mathfrak{G})$, the vector field generated by $\gamma$ on $\mathcal{A}$ is $D_A\gamma$ at $A\in \mathcal{A}$, where $D_A$ is the exterior covariant derivative w.r.t. $A$. The 1-form $\xi$ generated by $\gamma$ is $-H\gamma$ (lying in $\Omega^3(\mathfrak{g}_P)$ which, via the metric, can be viewed as the cotangent space at $A$). Now $\mathcal{A}$ can be viewed as an infinite-dimensional generalized Riemannian manifold with vanishing NS-flux. With this understanding, the formula $d\xi_a=\iota_aH$ naturally holds because $\mathcal{A}$ is flat and $\xi$ is translation-invariant \cite{BCG2}.

Now it is totally clear how to formally encode the reduction procedure using our Model IV. Here we directly use the action (\ref{IV2}) instead of (\ref{IV1}) because it seems that no obvious potential $B$ exists in this setting. Now $\psi_\pm$, $F$ are 1-forms living in $\Omega^1(\mathfrak{g}_P)$ and $\phi$, $\eta$ are elements in $\Gamma(\mathfrak{g}_P)$. The vector bundle $W$ is a trivial one with the self-dual part $\Omega^2_+(\mathfrak{g}_P)$ of $\Omega^2(\mathfrak{g}_P)$ as the fibre, and $\sigma(A)=F_A^+=(dA+\frac{1}{2}[A,A])^+$, namely, the self-dual part of the curvature associated to $A$. Let us write down the associated action.

\begin{itemize}
\item Since $\mathcal{A}$ is flat and equipped with the zero NS-flux, in the action of $S_{IV}$, there is only one term $\frac{1}{2}(F, F)$ in $S_I$.
\item The terms involving the ASD equation is of the following form:
\[\sqrt{-1}[\int_M\kappa(\sigma, L)+\kappa(D_A^+\psi_-, \chi_+)-\kappa(D_A^+\psi_-, \chi_-)+\kappa([\psi_-, \psi_+]^++D_A^+F, U)],\]
where $D_A^+$ is the self-dual part of $D_A$ acting on $\Omega^1(\mathfrak{g}_P)$ and $[\psi_-, \psi_+]^+$ the self-dual part of $[\psi_-, \psi_+]$.
\item The remaining terms involving $\phi$, $\eta$ are collected as follows:
\begin{eqnarray*}\int_M &\quad&\kappa(H \phi_{+-}, F)-\frac{1}{2}\kappa(\phi_{+-}, \ast\phi_{+-})+\kappa(\psi_+, \ast[\psi_-, \phi_{+-}])\\
&+&\frac{1}{2}\kappa(\phi_{++},\ast \phi_{--})+\frac{1}{2}\kappa([\psi_-, \phi_{++}], \ast \psi_-)+\frac{1}{2}\kappa([\psi_+, \phi_{--}], \ast \psi_+)\\
&-&\kappa(H\phi_{--},D_A\phi_{++})+\kappa(D_A\eta_+, \ast \psi_-)+\kappa(H\eta_+, \psi_-)\\
&-&\kappa(D_A\eta_-, \ast \psi_+)+\kappa(H\eta_-, \psi_+).\end{eqnarray*}
\end{itemize}

If additionally $M$ is equipped with an even generalized K$\ddot{a}$hler structure, then $\mathcal{A}$ is generalized K$\ddot{a}$hlerian with the underlying gauge invariant complex structures $\mathbb{J}_\pm$ defined naturally by $J_\pm$ acting on $\Omega^1(\mathfrak{g}_P)$. That $\tau_\pm$ are invariant under $\mathbb{J}_\pm$ respectively was already contained in \cite[Chap.~5.3]{LT}. Thus by our formalism, $\mathcal{M}$ is generalized K$\ddot{a}$hlerian.

\section{Conclusion}
To finish our present investigation, we point out some problems left for future work.

Firstly, in \cite{DM}, an N=2 cohomology theory and its equivariant version were sketched briefly as the mathematical background of balanced topological field theories. It was argued that the resulting cohomology would not give much more information than the traditional de Rham cohomoloy. However, by rule of thumb, generalized geometry suggests that what really of interest is the twisted de Rham cohomology associated to the twisted differential $d-H\wedge$, where the NS-flux $H$ plays an essential role. In balanced theories, the cohomology investigated in \cite{DM} has nothing to do with this $H$ and $H$ only enters the theory via our action $S_I$ or $S_{II}$. This suggests that in balanced theories there would be another cohomology theory with an action as an essential ingredient. Additionally, the relevant notion of generalized spinors is another important and useful part of generalized geometry. It would be of interest to see how generalized spinors arise in balanced topological field theories.

Secondly, in the literature, such as \cite{VW} \cite{DM} \cite{DPS} \cite{Sa}, there are other balanced topological Yang-Mills actions different from ours. It's interesting to investigate the difference in some detail and even compute the partition function of our model explicitly in some special situations.

 \section*{Acknowledgemencts}
 This study is supported by the Natural Science Foundation of Jiangsu Province (BK20150797).

\end{document}